

\documentclass[final,1p,times]{elsarticle}


\usepackage{amssymb}
\usepackage{amsmath}
\usepackage{amsthm}
\usepackage{mathrsfs}
\usepackage{mathtools}
\usepackage[mathscr]{eucal}
\newtheorem{proposition}{Proposition}


\begin{document}

\begin{frontmatter}

\title{A heavy-tail arctan-based mixture model for modelling and measuring actuarial risk }
\author{Pankaj Kumar\corref{cor1}}
\ead{kumar.131@iitj.ac.in} 

\author{Vivek Vijay}
\ead{vivek@iitj.ac.in}
\affiliation{organization={Indian Institute of Technology Jodhpur},
            addressline={Karwar}, 
            city={Jodhpur},
            postcode={332030}, 
            state={Rajasthan},
            country={India}}

\begin{abstract}
Heavy-tailed probability distributions are extremely useful and play a crucial role in modeling different types of financial data sets. This study presents a two-pronged methodology. First, a mixture probability distribution is created by combining Gaussian and Rayleigh distributions using the arctangent transformation, aimed at producing heavier-tailed features and enhancing alignment with real market data. Some statistical properties of the proposed model are also discussed. Furthermore, essential actuarial risk evaluation instruments, such as value-at-risk (VaR), tail value-at-risk (TVaR) and tail variance (TV) are employed for efficient risk management practices. Lastly, an application is provided using an insurance dataset to demonstrate the applicability of the proposed model. The proposed model demonstrates superior fitting performance compared to current baseline distributions, showcasing its practical value in financial risk evaluation. The combination of Gaussian and Rayleigh distributions through arctangent transformation is particularly successful in representing extreme market behaviour and tail dependencies that are frequently found in real-world financial data.  
\end{abstract}

\begin{keyword}
Arctan, Mixture Distribution \sep VaR \sep TVaR \sep TV \sep Insurance Data 
\end{keyword}

\end{frontmatter}

\section{Introduction}
\label{sec1}
Heavy-tailed probability distributions have received considerable focus in various research domains, such as actuarial science, biomedical research, engineering applications, risk assessment, and economic analysis \cite{alkhairy2021arctan}. Current research initiatives are centered on creating innovative techniques to establish new types of heavy-tailed probability distributions that provide improved descriptive power and greater modeling adaptability. In the realms of finance and actuarial science, a primary challenge is the accurate forecasting of significant monetary losses \cite{miljkovic2016modeling}. If organizations fail to adequately estimate these potential losses, they may encounter severe operational issues, including the threat of bankruptcy and ineffective premium pricing strategies. To tackle these challenges and enhance the precision of loss predictions, actuaries are continually looking for more versatile heavy-tailed distributions that can effectively represent the complexities found in real-world financial situations  \cite{tung2021arcsine, he2020arcsine, atchade2025modeling}. 

Standard statistical distributions often lack the necessary flexibility to accurately model datasets with heavy tails. To address this issue, numerous standard heavy-tailed distributions to effectively model heavy-tailed data are discussed in literature, including the log-normal model \cite{Singer2013}, Generalized Pareto distribution (GPD) family \cite{Arnold2015}, Weibull \cite{Lai2011}, and Gumbel \cite{Gomez2019}. However, these standard heavy-tailed models also have limitations in capturing fat-tailed data. Progress in the development of heavy-tailed distributions has been achieved through various innovative methodological techniques, such as variable transformation methods, quantile analysis, distribution composition, augmenting parameters in existing distributions, distribution compounding, and finite mixture models \cite{lee2013methods}. For instance, various distributions, such as the Arctan-X family of heavy-tailed models \cite{alkhairy2021arctan}, new distributions based on the Tan-G family \cite{souza2021tan}, exponential T-X transformation-based distributions \cite{Ahmad2021}, hyperbolic sine function based probability distributions \cite{Ahmad2023}, and new Topp Leone Kumaraswamy Marshall-Olkin generated families of distributions \cite{Atchade2024}, have been introduced for modeling the heavy tailed data. A novel family of probability distributions utilizing the sine function was proposed in \cite{Souzay2019}. A family of probability distributions leveraging sine and cosine functions was ropoed in \cite{Chesneau2019}. Conversely, other researchers have created new families of probability distributions employing inverse trigonometric functions, such as a new family of probability distributions using the arcsine and arccosine function \cite{Chesneau2021}. In \cite{Chaudhary2021}, the authors proposed the Arctan Lomax distribution. A new versatile log-logistic distribution based on the tangent function was introduced in \cite{Muse2021}. 

Financial and actuarial data commonly exhibit unique traits: they are usually heavy-tailed, possess unimodal characteristics, demonstrate right-skewed distributions, and include positive values \cite{Alamer, Ahmad}. The complex nature of today's financial data necessitates the creation of advanced probability distribution families that can address these intricate patterns. The introduction of innovative statistical models has significantly improved the efficacy of quantitative analytical methods. Although there has been considerable progress in developing new statistical frameworks, researchers still face basic challenges while trying to fit real-world data into traditional statistical framework. Given the critical role of heavy-tailed distributions in actuarial work, there is a strong impetus within the actuarial community to create new flexible distribution models. This domain is experiencing rapid advancement, with researchers utilizing various innovative strategies to build more refined models. 

\subsection{ Contribution and Outline}
In this article, we presented a new heavy-tailed distribution that is essentially developed using the arctan function, incorporating a mixture of Gaussian and Rayleigh densities. The primary benefit of this new mixture distribution based on trigonometric functions is its capacity to create new distributions while utilizing a minimal number of parameters, offering greater flexibility for modeling heavy-tailed data. Additionally, the probability density function (pdf) and the closed form of the cumulative distribution function (cdf) for the proposed distribution are straightforward to express. We additionally explore some statistical properties, including the quantile function, moments, skewness and kurtosis, survival function, and hazard function and also discuss the maximum likelihood estimation method for parameter estimation for the proposed distribution. This study also examines essential risk assessment metrics, specifically value at risk (VaR), tail value at risk (TVaR) and tail variance (TV) in the context of assessing risk within a mixture distribution framework that integrates Gaussian and Rayleigh distributions through the arctangent function. The study demonstrates the practical implementation of this method using insurance data. The rationale for selecting the mixture distribution for risk prediction lies not only in the absence of an explicit formula for calculating risk measures under this distribution but also in its propensity to produce fatter tails, which better align with actual data observations. From a financial perspective, the choice of this distribution is justified by its advantages over both Gaussian and Rayleigh distributions.

The following sections are structured as follows: Section 2 introduces the mixture distribution framework based on the arc-tangent function. In, Section 3 explain the some statistical properties of proposed distribution.  Section 4 formulates the risk assessment metrics value at risk (VaR), tail value at risk (TVaR) and tail variance (TV) for the proposed distribution. Section 5, we gave the parameter estimation method for proposed distribution. Following that, Section 6 explores the application of the developed distribution alongside recognized risk evaluation methods for insurance data.

\section{Mixture distribution based on Arctan}
In this section, we examine the combined mixture distribution formed by utilizing Gaussian \cite{karim2011rayleigh} and Rayleigh \cite{beckmann1964rayleigh} distributions through arctan transformation. One of the advantages of the Arctan-X family is that, its probability density function (PDF) can be easily articulated, alongside a manageable and closed-form cumulative distribution function (CDF). Let, $X$  be a random variable from the Arctan-$X$ distribution family. The associated CDF and PDF are respectively defined as
\begin{equation}
G_{\text{arctan}}(x; \boldsymbol{\delta}) = \frac{4}{\pi} \arctan(H(x; \boldsymbol{\delta})), \quad x \in \mathbb{R},
\end{equation}
and 
\begin{equation}
g_{\text{arctan}}(x; \boldsymbol{\delta}) = \frac{4}{\pi} \frac{h(x;\boldsymbol{\delta)}}{1+H(x; \boldsymbol{\delta})^2}, \quad x \in \mathbb{R},
\end{equation}
where $\boldsymbol{\delta} \in \mathbb{R}$ signifies the parameter vector, and $H(x; \boldsymbol{\delta})$ represents the CDF of underlying random variable. In this context, $H(x; \boldsymbol{\delta})$ describes the mixture distribution, which is characterized as a weighted linear combination of multiple probability distributions \cite{ahmed2021managing}. Such distributional frameworks prove particularly useful when modeling specific datasets with unique characteristics. $H(x; \boldsymbol{\delta})$ consisting of Gaussian and Rayleigh components, where $X \thicksim N(\omega,\eta^2)$ with probability density function (PDF) is
\begin{equation}
f_{X}(x \mid \eta) = \frac{1}{\eta\sqrt{2\pi}} \exp\left(-\frac{(x - \omega)^2}{2\eta^2}\right), \quad x,\omega \in \mathbb{R},
\end{equation}
and $\eta$ follows a Rayleigh distribution with scale parameter $\psi>0$, and its PDF is defined as
\begin{equation}
f_{\eta}(\eta) = \frac{\eta}{\psi^2} \exp\left(-\frac{\eta^2}{2\psi^2}\right), \quad x\geq 0.
\end{equation}
Then the mixture PDF of $H(x; \boldsymbol{\delta})$ is obtained by integrating the conditional PDF of $X$ given $\eta$ over all possible values of $\eta$, weighted by the PDF of $\eta$. This is an application of the law of total probability for continuous mixtures:
\begin{equation}
h(x) = \int_{0}^{\infty} f_{X \mid \eta}(x \mid \eta) \cdot f_{\eta}(\eta)  \, d\eta
\label{eq:total_prob}
\end{equation}

Substituting equations (3) and (4) into (5), we get
\[
h(x) = \int_{0}^{\infty} \left[ \frac{1}{\eta\sqrt{2\pi}} \exp\left(-\frac{(x - \omega)^2}{2\eta^2}\right) \right] \cdot \left[ \frac{\eta}{\psi^2} \exp\left(-\frac{\eta^2}{2\psi^2}\right) \right]  \, d\eta
\]
that is
\begin{equation}
h(x) = \frac{1}{\sqrt{2\pi} \psi^2} \int_{0}^{\infty} \exp\left(-\frac{(x - \omega)^2}{2\eta^2} - \frac{\eta^2}{2\psi^2} \right)  \, d\eta
\end{equation}
To evaluate the integral, we define constants $a$ and $b$ to simplify the exponent
\[
a = \frac{(x - \omega)^2}{2}, \quad b = \frac{1}{2\psi^2}
\]
Thus, the integral becomes
\begin{equation}
I = \int_{0}^{\infty} \exp\left(-\frac{a}{\eta^2} - b\eta^2 \right)  \, d\eta
\label{eq:integral_I}
\end{equation}

\[
 = \frac{1}{2}\sqrt{\frac{\pi}{b}} \exp(-2\sqrt{ab})
\]
Substituting back the values for $a$ and $b$, we get
\[
I = \frac{1}{2}\sqrt{\frac{\pi}{\frac{1}{2\psi^2}}} \exp\left(-2\sqrt{ \frac{(x-\omega)^2}{2} \cdot \frac{1}{2\psi^2} } \right)
\]
Simplifying the above expression, we get
\begin{align*}
I
  &= \frac{\psi \sqrt{2\pi}}{2} \exp\left(-\frac{|x-\omega|}{\psi}\right)
\label{eq:solved_integral}
\end{align*}
Now substitute $I$ in equation (6)
\[
h(x) = \frac{1}{\sqrt{2\pi} \psi^2} \cdot \left[ \frac{\psi \sqrt{2\pi}}{2} \exp\left(-\frac{|x-\omega|}{\psi}\right) \right]
\]

\begin{equation}
h(x) = \frac{1}{2\psi} \exp\left(-\frac{|x-\omega|}{\psi}\right), \quad x\in \mathbb{R}
\end{equation}

This is the canonical form of the mixture PDF with location parameter $\omega$ and scale parameter $\psi >0$. Now, substitute $h(x)$ in equations (1) and (2) to get the mixture distribution based on the Arctan. The CDF and PDF, respectively are
\begin{equation}
G(x)
= 
\begin{cases}
\frac{4}{\pi} \arctan \Biggl(1-\frac{1}{2} \exp\!\left(-\dfrac{x-\omega}{\psi}\right)\Biggl), & x \geq \omega, \\[1.2em]
\frac{4}{\pi} \arctan \Biggl(\dfrac{1}{2} \exp\!\left(-\dfrac{\omega - x}{\psi}\right)\Biggl), & x < \omega,
\end{cases}
\end{equation}
and
\begin{equation}
g(x) = 
\begin{cases}
\frac{2e^{- \frac{x - \omega}{\psi}}}{\pi\psi(1+(1-\frac{1}{2}e^{- \frac{x - \omega}{\psi}})^2)}, \quad x\geq \omega, \\[1.2em]

\frac{8e^{- \frac{\omega -x}{\psi}}}{\pi\psi(4 + e^{- \frac{\omega -x}{\psi}})^2}, \quad x < \omega.
\end{cases}
\end{equation}
\section{Properties of the Proposed Model}
In this section, we derive some statistical properties of the proposed model, including the quantile function, skewness and kurtosis, moments, hazard rate function, survival function and cumulative hazard function. 
\subsection{Quantile Function}
The quantile function of a statistical distribution indicates the value of a variable at which the cumulative probability meets a given proportion. It is essential for evaluating risk and developing confidence intervals when studying the new family of distributions. 
\begin{proposition}
The quantile function ($\mathbf{u}$) associated with the proposed distribution is represented by
\begin{equation}
\mathbf{u}(p) =
\begin{cases}
\omega + \psi \ln\!\left( 2 \tan\!\left(\tfrac{\pi p}{4}\right) \right), & 0 < p < \tfrac{1}{2}, \\[1.5ex]
\omega - \psi \ln\!\left( 2 \left(1 - \tan\!\left(\tfrac{\pi p}{4}\right)\right) \right), & \tfrac{1}{2} \leq p < 1.
\end{cases}
\end{equation}
\end{proposition}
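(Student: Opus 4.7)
The plan is to invert the piecewise CDF given in equation~(9) directly on each of its two pieces. Since $G$ is continuous and strictly increasing in $x$, the breakpoint $x = \omega$ is mapped to a well-defined threshold $p^{*} := G(\omega) = \frac{4}{\pi}\arctan(1/2)$ in $(0,1)$, and this value is what governs which expression for $\mathbf{u}(p)$ is used on either side.

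For the lower branch ($x < \omega$), I would start from $p = \frac{4}{\pi}\arctan\bigl(\frac{1}{2} e^{-(\omega - x)/\psi}\bigr)$, apply $\tan(\pi p/4)$ to both sides, rearrange to isolate the exponential, and take a logarithm to obtain $x = \omega + \psi \ln\bigl(2\tan(\pi p/4)\bigr)$. For the upper branch ($x \geq \omega$), an analogous manipulation applied to $p = \frac{4}{\pi}\arctan\bigl(1 - \frac{1}{2} e^{-(x - \omega)/\psi}\bigr)$ yields $x = \omega - \psi \ln\bigl(2(1 - \tan(\pi p/4))\bigr)$. Both computations only use the fact that $\arctan$ and $\tan$ are mutual inverses on the principal branch together with the monotonicity of $\ln$, so they are essentially routine algebra.

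The step that requires the most care, and where I would focus the verification, is matching the domains of these two expressions to the $p$-ranges in the statement. The algebra shows that the first formula produces $x < \omega$ iff $2\tan(\pi p/4) < 1$, i.e., $p < p^{*}$, and the second produces $x \geq \omega$ iff $p \geq p^{*}$. I would then check by direct substitution that $G(\mathbf{u}(p)) = p$ in each case and confirm that the cutoff between the two branches is described consistently; this is the main item to get right, since the natural cutoff coming out of the inversion is $p^{*} = \frac{4}{\pi}\arctan(1/2) \approx 0.5903$ rather than $\frac{1}{2}$. Beyond that, the only remaining items are trivial well-definedness checks: $2\tan(\pi p/4) > 0$ on $(0,1)$ and $2(1 - \tan(\pi p/4)) > 0$ for $p < 1$, so that both logarithms make sense.
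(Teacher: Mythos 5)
Your proposal follows the same route as the paper's own proof: invert each branch of the CDF in equation (9) by applying $\tan(\pi p/4)$, isolating the exponential, and taking logarithms. The algebra on both branches matches the paper's line for line. The difference is that you add the domain-matching step, and that step is not cosmetic: you are right that the natural breakpoint coming out of the inversion is $p^{*} = \tfrac{4}{\pi}\arctan(\tfrac{1}{2}) \approx 0.5903$, since $G(\omega) = \tfrac{4}{\pi}\arctan(1-\tfrac{1}{2}) = p^{*}$, whereas the proposition assigns the upper-branch formula to all $p \geq \tfrac{1}{2}$. The paper's proof never checks which $p$-range each inverted expression is valid on, and as a consequence the stated case split is inconsistent with its own derivation: for $p \in [\tfrac{1}{2}, p^{*})$ the quantile lies below $\omega$, so the lower-branch formula $\omega + \psi\ln\!\left(2\tan(\pi p/4)\right)$ is the correct one there, and one can verify directly that $G$ applied to the upper-branch expression does not return $p$ on that interval (the two formulas agree only at $p = p^{*}$, where both give $\omega$). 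So your plan is sound and, if carried out, would yield a corrected version of the proposition with the split at $p^{*}$ rather than $\tfrac{1}{2}$; the verification $G(\mathbf{u}(p)) = p$ that you flag as the main item is exactly the check the paper omits.
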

\begin{proof}
We want \(\mathbf{u}(p)\) such that \(G(x) = p\). So, for \(x \ge \omega\),  
\[
p = \frac{4}{\pi} \arctan \left( 1 - \frac{1}{2} e^{-\frac{x-\omega}{\psi}} \right),
\]  
implies
\[
e^{-\frac{x-\omega}{\psi}} = 2 \left( 1 - \tan\left(\frac{\pi p}{4}\right) \right),
\] 
\[
-\frac{x-\omega}{\psi} = \ln \left[ 2 \left( 1 - \tan\left(\frac{\pi p}{4}\right) \right) \right],
\]
Thus,  
\[
x = \omega - \psi \ln \left[ 2 \left( 1 - \tan\left(\frac{\pi p}{4}\right) \right) \right].
\]
For \(x < \omega\),  
\[
p = \frac{4}{\pi} \arctan \left( \frac{1}{2} e^{-\frac{\omega-x}{\psi}} \right),
\]
implies 
\[
-\frac{\omega-x}{\psi} = \ln \left[ 2 \tan\left(\frac{\pi p}{4}\right) \right].
\]
Thus,  
\[
x = \omega + \psi \ln \left[ 2 \tan\left(\frac{\pi p}{4}\right) \right].
\]
\end{proof}
\subsection{Skewness and Kurtosis}
Skewness indicates both the direction and degree of asymmetry in a distribution, where positive values suggest a longer tail on the right and negative values imply a longer tail on the left.
\begin{equation}
\text{Skewness} \;=\; 
\frac{\mathbf{u}_{\tfrac{1}{4}} + \mathbf{u}_{\tfrac{3}{4}} - 2\mathbf{u}_{\tfrac{1}{2}}}{\mathbf{u}_{\tfrac{3}{4}} - \mathbf{u}_{\tfrac{1}{4}}}.
\end{equation}
Kurtosis evaluates the weight of the tails and the sharpness of the peak, emphasizing the occurrence of outliers and departures from a normal distribution. 
\begin{equation}
\text{Kurtosis} \;=\; 
\frac{\mathbf{u}_{\tfrac{7}{8}} + \mathbf{u}_{\tfrac{3}{8}} - \mathbf{u}_{\tfrac{5}{8}} - \mathbf{u}_{\tfrac{1}{8}}}{\mathbf{u}_{\tfrac{3}{4}} - \mathbf{u}_{\tfrac{1}{4}}}.
\end{equation}
Here, $\mathbf{u}$ is the quantile function. When combined, these measures offer a more comprehensive insight into the shape and trustworthiness of the distribution, which is particularly crucial for financial data.
\subsection{Moments}
Moments are utilized for determining particularly statistical characteristics of distributions. In the proposed model, the $r^{th}$ moment can be obtained as 
\begin{equation}
\mu_r = 
\begin{cases}
\int_{\omega}^{\infty} x^r \frac{2e^{- \frac{x - \omega}{\psi}}}{\pi\psi(1+(1-\frac{1}{2}e^{- \frac{x - \omega}{\psi}})^2)}, \quad x\geq \omega, \\[1.2em]

\int_{-\infty}^{\omega} x^r \frac{8e^{- \frac{\omega -x}{\psi}}}{\pi\psi(4 + e^{- \frac{\omega -x}{\psi}})^2}, \quad x < \omega.
\end{cases}
\end{equation}
\subsection{Survival Function}
For the proposed distribution, survival function is defined as:  
\[
\text{sug}(x) = 1 - G(x),
\]  
\begin{equation}
\text{sug}(x) =  
\begin{cases}
1 -\frac{4}{\pi} \arctan \Biggl(1-\frac{1}{2} \exp\!\left(-\dfrac{x-\omega}{\psi}\right)\Biggl), & x \geq \omega, \\[1.2em]
1 - \frac{4}{\pi} \arctan \Biggl(\dfrac{1}{2} \exp\!\left(-\dfrac{\omega - x}{\psi}\right)\Biggl), & x < \omega,
\end{cases}
\end{equation}

\subsection{Cumulative Hazard Function}
For the proposed distribution, cumulative hazard function is defined as:  
\[
\text{cg}(x) = - \log \big( \text{sug}(x) \big),
\] 
\begin{equation}
\text{cg}(x) = 
\begin{cases}
- \log \Bigg[1 - \frac{4}{\pi} \arctan \Biggl(1-\frac{1}{2} \exp\!\left(-\dfrac{x-\omega}{\psi}\right)\Biggl)\Bigg], & x \geq \omega, \\[1.2em]
- \log \Bigg[1 - \frac{4}{\pi} \arctan \Biggl(\dfrac{1}{2} \exp\!\left(-\dfrac{\omega - x}{\psi}\right)\Biggl)\Bigg], & x < \omega,
\end{cases}
\label{eq:cumhazard}
\end{equation}
\subsection{Hazard Rate Function}
The hazard rate function is expressed as the quotient of the probability density function (PDF) and the survival function, indicating the immediate rate of failure at a specific moment. It is essential in the modelling and examination of extreme occurrences.

\begin{equation}
  Z(x; \zeta) = \frac{g(x; \zeta)}{1 - G(x; \zeta)}. 
\end{equation}

\section{Actuarial Risk Measures}
\subsection{Value at Risk(VaR)}
A mathematical risk measure ($\rho$) is a function that outputs a numerical value from $X$, a set of random variables. Using this, one can measure the risk associated with $X$. With this idea, we first select eligible members from the set $X$ of acceptable random variables and optimize $\underset{X}{\text{min}}\,  \rho (X)$ \cite{norton2021cvar, haas2009var}. Value at Risk (VaR) is a fundamental concept in risk management.  
The intuition behind the $\alpha$-percentile of a distribution is easily understood, and VaR clearly conveys its meaning: the maximum loss one can expect at a specified confidence level. The Value at Risk at confidence level $\alpha$ is computed under the proposed model as
\begin{equation}
\text{VaR}_{\alpha}(X) = \min\left\{ \vartheta \in \mathbb{R} \;\middle|\; G_X(\vartheta) \geq \alpha \right\}, \quad \frac{1}{2} < \alpha < 1,
\end{equation}
\begin{equation}
\text{VaR}_{\alpha}(X) = \omega - \psi \log(2 - 2 tan(\frac{\pi\alpha}{4})), \quad \frac{1}{2} < \alpha < 1.
\end{equation}
Figure~\ref{Fig1} shows the plot of value at risk for different values of the parameters of the proposed model.  
\begin{figure}[ht]
\begin{center}
\centerline{\includegraphics[width=0.6\columnwidth]{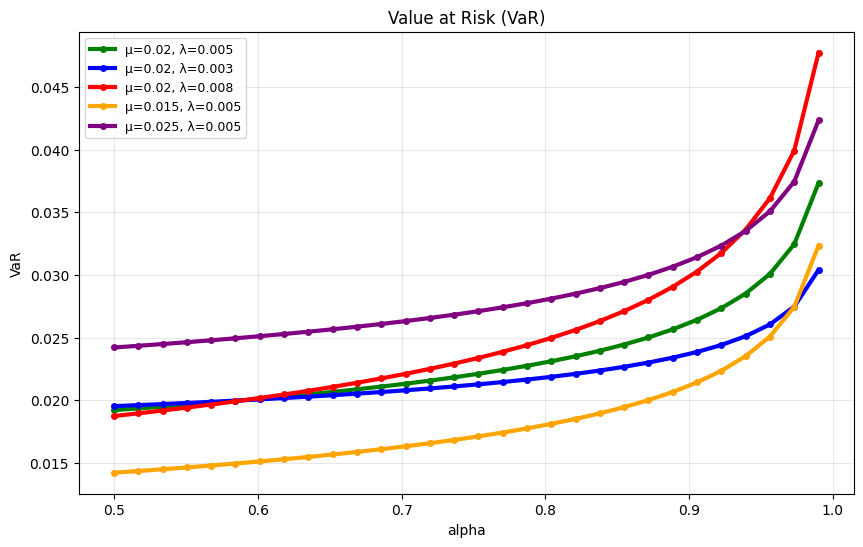}}
\caption{The effectiveness of VaR risk metrics is evaluated using the proposed model with different values of $\omega$, $\psi$}
\label{Fig1}
\end{center}
\end{figure}

\subsection{Tail Value at Risk(TVaR)}
The VaR captures the threshold of loss at a given confidence level, but it does not reflect the magnitude of losses beyond this point. The Tail Value at Risk (TVaR) \cite{acerbi2002expected} measures the expected loss given that the loss exceeds the VaR threshold. TVaR metric encompasses the expected losses found in the distribution's tail, representing the most unfavourable results that occur with a probability of $1-\alpha$. Hence, using the proposed distribution, the expression for the TVaR risk measure is  
\begin{equation}
\text{TVaR}_{\alpha}(X) = \frac{1}{1-\alpha} \int_{VaR_\alpha}^\infty x g(x) dx, \quad \frac{1}{2} < \alpha < 1.
\end{equation}
Figure~\ref{Fig2} shows the plot of tail value at risk for different values of the parameters of the proposed model.
\begin{figure}[http]
\begin{center}
\centerline{\includegraphics[width=0.6\columnwidth]{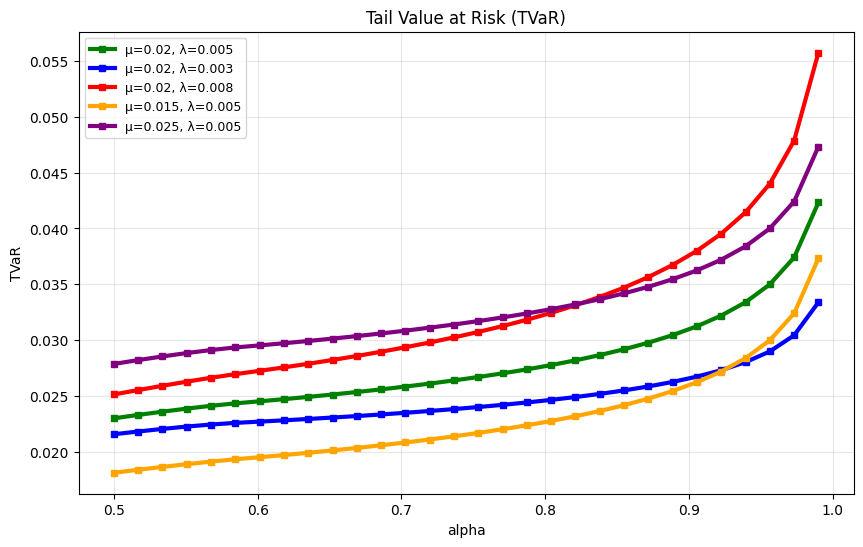}}
\caption{The effectiveness of TVaR risk metrics is evaluated using the proposed model with different values of $\omega$, $\psi$}
\label{Fig2}
\end{center}
\end{figure}
\subsection{Tail Variance(TV)}
Nonetheless, we observe that the above risk metric only reflects the average of tail loss exceeding VaR, without considering the variation of such tail loss. This is evidently insufficient for evaluating tail risk, particularly for significant losses associated with rare events. To capture the fluctuation of tail loss and thereby improve the management of extreme risk, a tail variance (TV) \cite{Furman} is proposed as a measure of right-tail variability, defined as 
\begin{align}
TV_\alpha(X) &= E(X^2 \mid X > VaR_\alpha) - (TVaR_\alpha)^2 \\
&= \frac{1}{1-\alpha} \int_{VaR_\alpha}^\infty x^2 g(x)  dx - (TVaR_\alpha)^2
\end{align}
Figure~\ref{Fig3} shows the plot of tail variance for different values of the parameters of the proposed model.
\begin{figure}[ht]
\begin{center}
\centerline{\includegraphics[width=0.6\columnwidth]{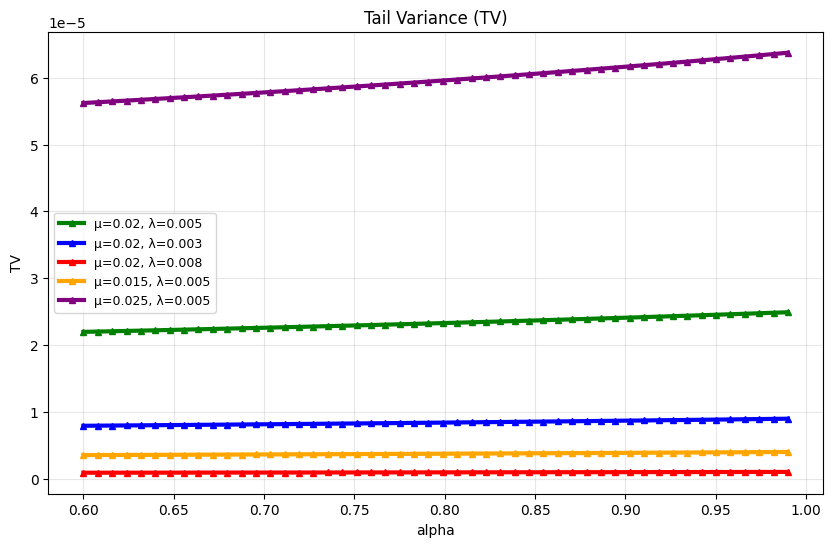}}
\caption{The effectiveness of TV risk metrics is evaluated using the proposed model with different values of $\omega$, $\psi$}
\label{Fig3}
\end{center}
\end{figure}
Table~\ref{Tab1} shows the numerical values for VaR, TVaR and TV for $\alpha \in (0.5,1)$.
\begin{table}[h]
\begin{center}
\begin{tabular}{|l|l|l|l|l|l|l|l|}
\hline
{$\alpha$} & {$\text{VaR}_{\alpha}$} & {$\text{TVaR}_{\alpha}$} & {$\text{TV}_{\alpha}$} & {$\alpha$} & {$\text{VaR}_{\alpha}$} & {$\text{TVaR}_{\alpha}$} & {$\text{TV}_{\alpha}$} \\
\hline
0.609 & 0.020187 & 0.024627 & 0.000022 & 0.827 & 0.023650 & 0.028356 & 0.000023 \\
0.663 & 0.020799 & 0.025296 & 0.000022 & 0.881 & 0.025357 & 0.030146 & 0.000024 \\
0.718 & 0.021535 & 0.026095 & 0.000023 & 0.936 & 0.028229 & 0.033109 & 0.000024 \\
0.772 & 0.022451 & 0.027080 & 0.000023 & 0.990 & 0.037341 & 0.042322 & 0.000025 \\
\hline
\end{tabular}
\end{center}
\caption{\label{fontsizes} Estimated VaR, TVaR and TV at different confidence levels and for fixed $\omega = 0.02$, $\psi = 0.005$, based on the proposed model.}
\label{Tab1}
\end{table}

\section{Maximum Likelihood Estimation (MLE)}
Estimating parameters through the maximum likelihood approach enables the use of all the information present in the data to achieve accurate estimates. Moreover, it offers flexibility and is suitable for a variety of statistical models, making it easier to analyze intricate distributions. For a sample $x_1, \dots, x_n$, the likelihood function is
\begin{equation}
L(\omega, \psi) = \prod_{i=1}^n g(x_i).
\end{equation}
Then the Log-likelihood function for the proposed distribution is obtained as




 



\begin{align*}
\ell(\omega, \psi) = &\; n_1 \ln 2 + n_2 \ln 8 - \frac{1}{\psi} \left[ \sum_{x_i \ge \omega} (x_i - \omega) + \sum_{x_i < \omega} (\omega - x_i) \right] - n \ln \pi - n \ln \lambda \\
& - \sum_{x_i \ge \omega} \ln\left[ 1 + \left( 1 - \frac12 e^{-(x_i - \omega)/\psi} \right)^2 \right] - 2 \sum_{x_i < \omega} \ln\left( 4 + e^{-(\omega - x_i)/\psi} \right).
\end{align*}.

\section{Insurance Data Analysis}
For the applicability of the proposed mixture model, we compare it with the baseline Gaussian, Rayleigh, and NR distributions for modeling the heavy-tailed insurance dataset. We examine the information criteria such as AIC, CAIC and HQIC. The models with the lowest information values are considered the best. Here are illustrations of information criteria for reference.
\begin{align*}
\text{AIC}  &= -2l + 2r, \\
\text{CAIC} &= -2l + \frac{2mr}{m - r - 1}, \\
\text{HQIC} &= -2l + 2r \ln(\ln(m)),
\end{align*}
here, $l$ signifies the log-likelihood function evaluated at the maximum likelihood estimates (MLEs), $m$ corresponds to the sample size, and $r$ specifies the count of parameters within the statistical model. We use the Insurance dataset, which includes the unemployment insurance data from July 2008 to April 2013  from the Department of Labour, Licensing, and Regulation of the State of Maryland, USA (https://catalog.data.gov/dataset/). This dataset consists of a total of 58 observations and 20 attributes, among which we focus on the attribute titled "Number of First UI Checks Issued - Ex. Fed Employees”. This attribute measures the monthly data of unemployment insurance payments distributed to former federal government employees.
Table~\ref{Tab2} displays insights on insurance data, while Figures~\ref{Fig4} and ~\ref{Fig5} illustrate the histogram and box plot of the dataset along with descriptive statistics.
\begin{table}[h!]
\centering
\begin{tabular}{|cccccccccc|}
\hline
0.052 & 0.033 & 0.039 & 0.050 & 0.029 & 0.052 & 0.060 & 0.032 & 0.057 & 0.064 \\
0.061 & 0.064 & 0.041 & 0.036 & 0.050 & 0.053 & 0.061 & 0.068 & 0.060 & 0.050 \\
0.064 & 0.057 & 0.061 & 0.059 & 0.069 & 0.070 & 0.137 & 0.170 & 0.100 & 0.090 \\
0.222 & 0.109 & 0.068 & 0.063 & 0.056 & 0.090 & 0.074 & 0.095 & 0.114 & 0.133 \\
0.066 & 0.075 & 0.072 & 0.054 & 0.057 & 0.052 & 0.066 & 0.069 & 0.083 & 0.044 \\
0.060 & 0.080 & 0.058 & 0.080 & 0.080 & 0.052 & 0.065 & 0.073 &        &       \\
\hline
\end{tabular}
\caption{Insurance dataset values}\label{Tab2}
\end{table}

\begin{figure}[h!]
    \centering
    \begin{minipage}{0.48\textwidth}
        \centering
        \includegraphics[width=\linewidth]{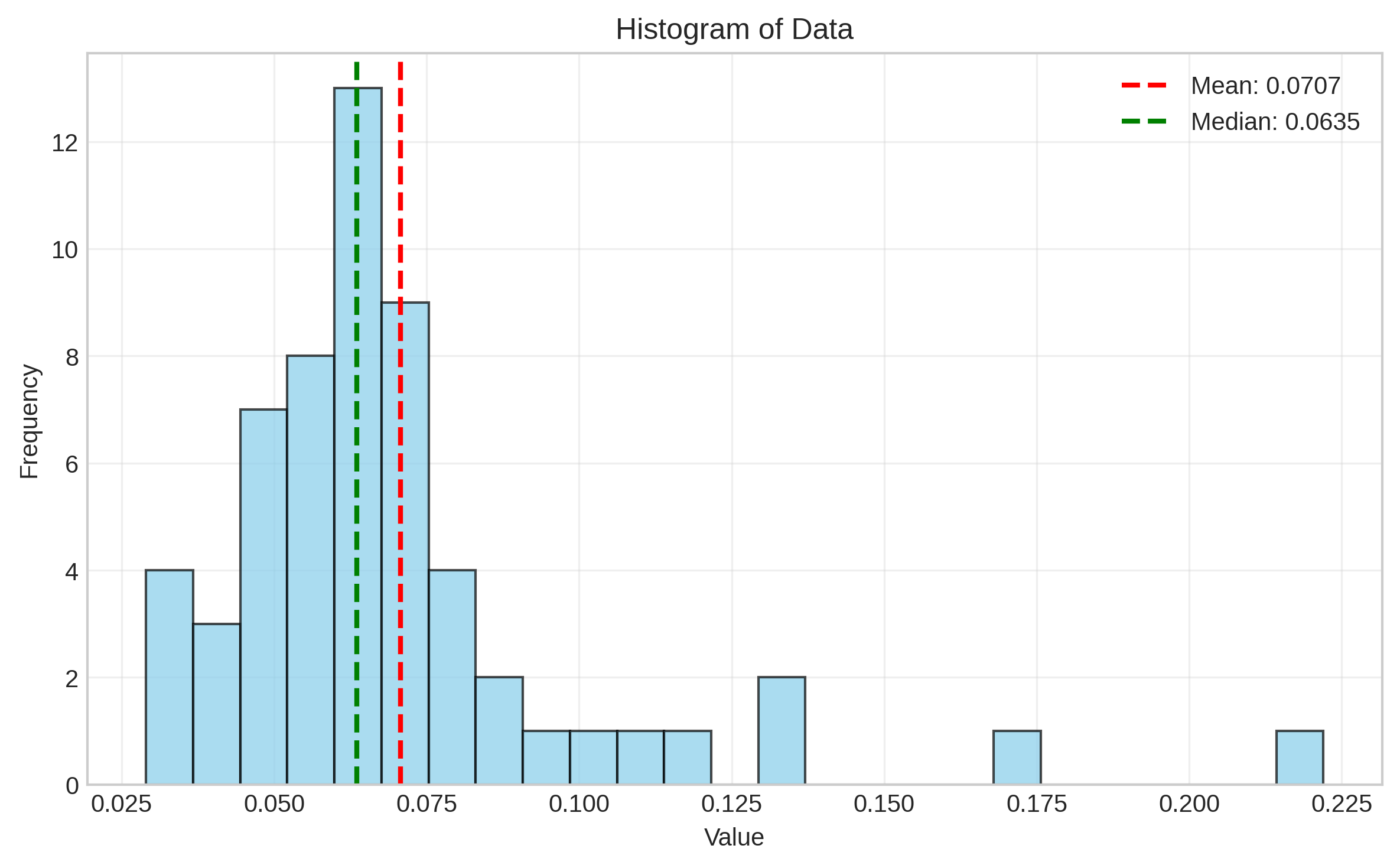}
        \caption{Histogram of Insurance Dataset} \label{Fig4}
    \end{minipage}
    \hfill
    \begin{minipage}{0.48\textwidth}
        \centering
        \includegraphics[width=\linewidth]{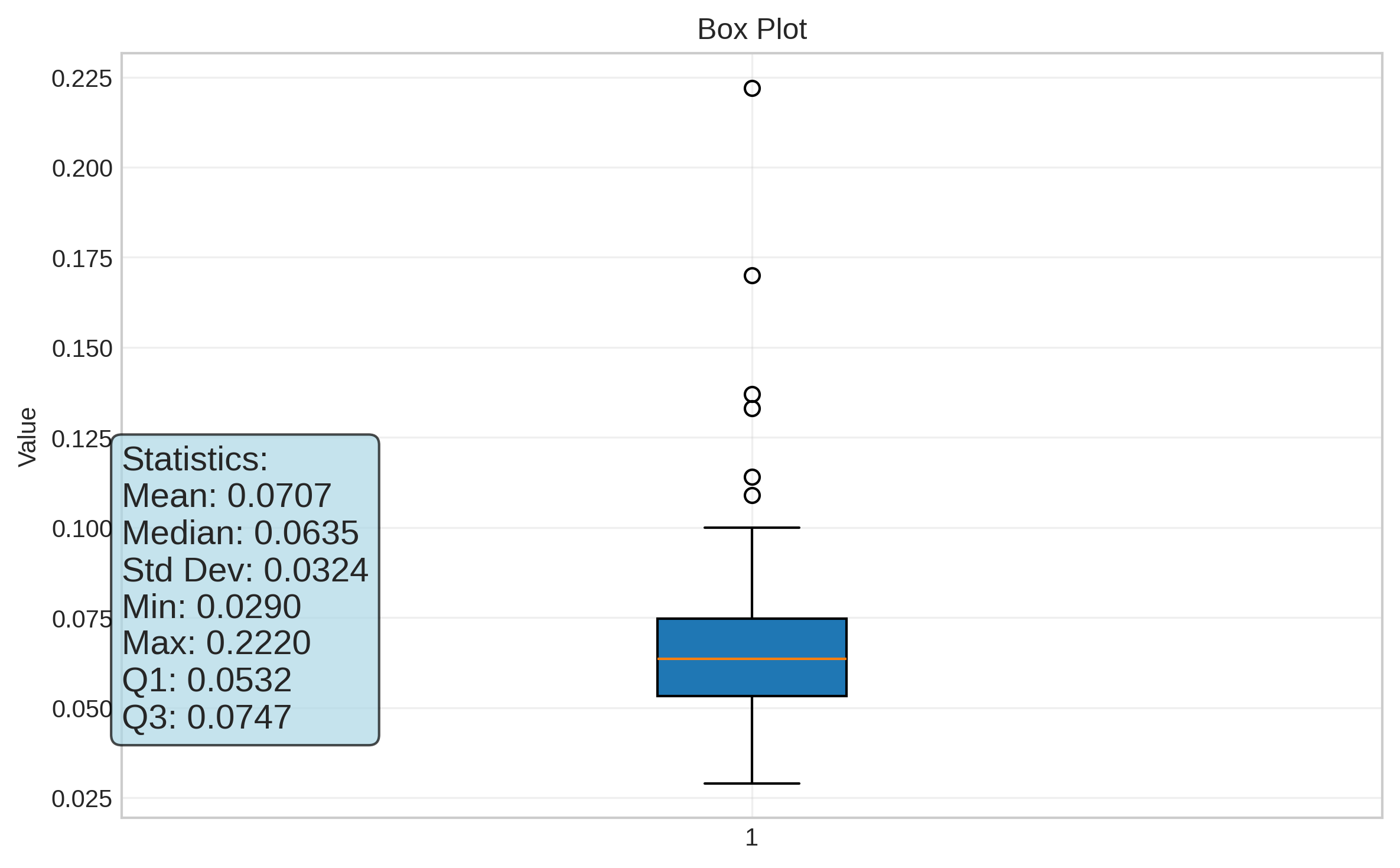}
        \caption{Boxplot of Insurance Dataset}\label{Fig5}
    \end{minipage}
\end{figure}
To compare the density of distributions with the arctan mixture model, they are plotted in Figure~\ref{Fig6}, showing that the arctan mixture model exhibits fatter tails.
\begin{figure}[ht]
\begin{center}
\centerline{\includegraphics[width=0.6\columnwidth]{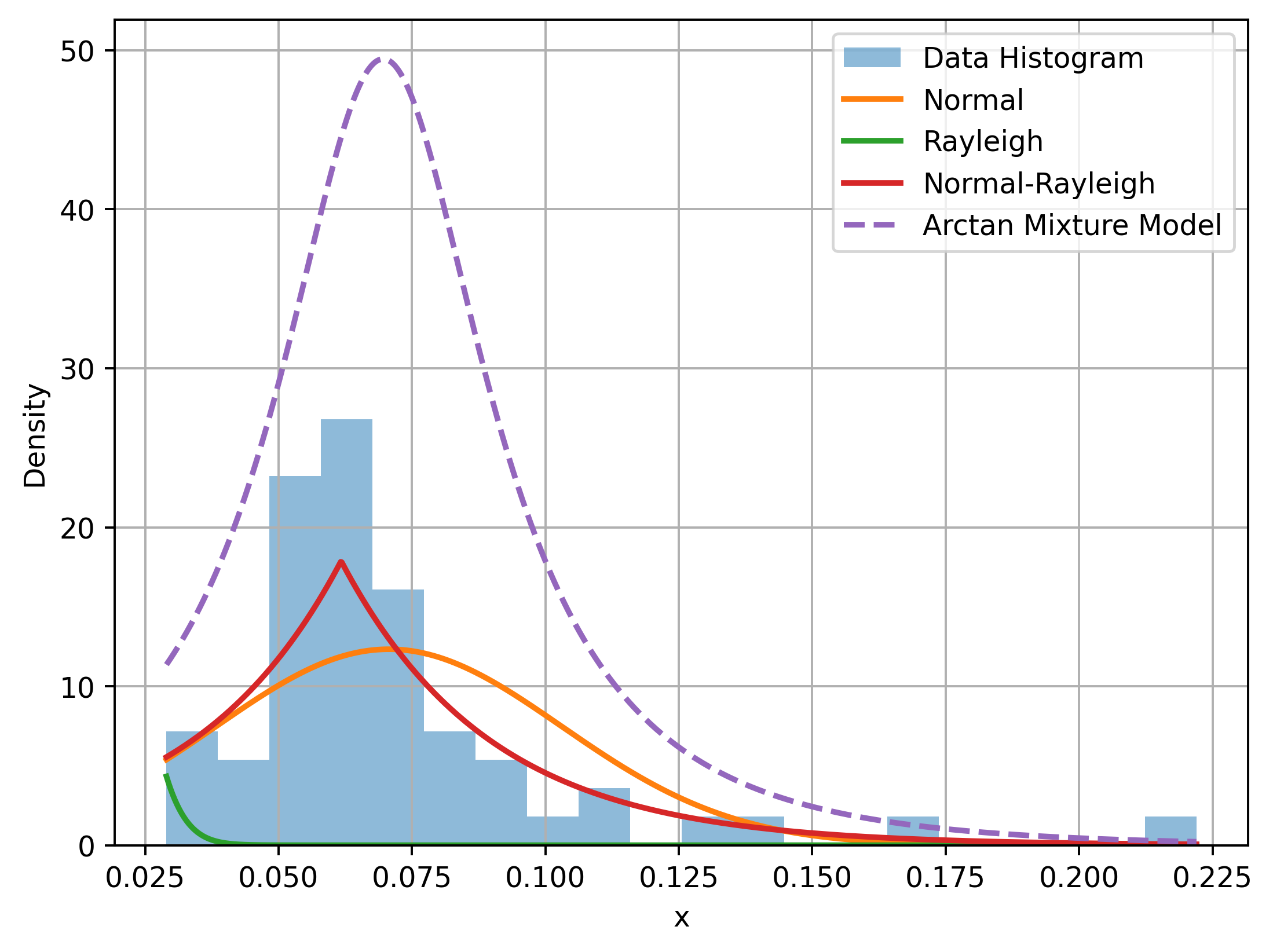}}
\caption{Comparison of different densities showing that heavy fat tails for the arctan mixture density.}
\label{Fig6}
\end{center}
\end{figure}
Table~\ref{Tab3} provides the information criterion values for the Insurance Dataset, comparing the proposed model with baseline models.
\begin{table}[h!]
\centering
\begin{tabular}{|l| l| l| l| l| l| l| l|}
\hline
\textbf{Model} & \textbf{Par.} & $r$ & \textbf{LL} & \textbf{AIC} & \textbf{BIC} & \textbf{CAIC} & \textbf{HQIC} \\
\hline
\textbf{Arctan-GR} & $\omega=0.0901 $ & 2 & \textbf{188.6452} & \textbf{-371.2904} & \textbf{-365.1091} & \textbf{-370.8460} & \textbf{-368.8827} \\
                 (Mixture)   & $\psi=0.0162$ &&&&&& \\
\hline                    
Gaussian        & $\omega=0.0707, $                & 2 & 116.6856 & -229.3711 & -225.2502 & -229.1529 & -227.7660 \\
& $\eta=0.0324$ &&&&&& \\
\hline
Rayleigh      & $\eta=0.0229$                               & 1 & -54.5752 & 111.1505  & 113.2109  & 111.2219  & 111.9531  \\
&  &&&&&& \\
\hline
Gaussian-Rayleigh      & $\omega=0.0707$                     & 2 & 126.2571 & -248.5142 & -244.3933 & -248.2960 & -246.9090 \\
 (NR) & $\psi=0.0209$ &&&&&& \\
\hline
\end{tabular}
\caption{\label{Tab3} Model Comparison Table}
\vspace{0.5em}
\begin{minipage}{0.95\textwidth}
\footnotesize
\textbf{Note:} For AIC, BIC, CAIC, and HQIC, \textit{lower values indicate better fit}. \\
For LL (Log-Likelihood), \textit{higher values indicate better fit}. \\
$r =$ number of parameters in the model. \\
\textbf{Bold} shows a better fit model among others.
\end{minipage}
\end{table}
Table~\ref{Tab3} shows that the arctan-based mixture model performs better in fitting the heavy-tailed data than other baseline models. 

\begin{table}[ht!]
\begin{center}
\begin{tabular}{|c|c|c|c|}
\hline
\textbf{Confidence Level ($\alpha$)} & $\text{VaR}_{\alpha}$ & $\text{TVaR}_{\alpha}$ & $\text{TV}_{\alpha}$ \\
\hline
75\%	& 0.073383	& 0.094426	& 0.000478 \\
80\%	& 0.077809	& 0.099158	& 0.000486 \\	
85\%	& 0.083671	& 0.105353	& 0.000494 \\	
90\%	& 0.092182	& 0.114226	& 0.000503 \\	
95\%	& 0.107228	& 0.129665	& 0.000512 \\	
99\%	& 0.143366	& 0.166143	& 0.000520 \\	
\hline
\end{tabular}
\end{center}
\caption{\label{Tab4} Empirical VaR, TVaR and TV at different confidence levels for insurance data.}
\end{table}
Table~\ref{Tab4} results show the empirical VaR, TVaR and  TV risk measures for the insurance dataset.

\clearpage
\section{Conclusions}

In this article, we introduce a new modeling framework designed to address the challenges posed by heavy-tailed financial datasets and applications in actuarial risk management. The proposed model demonstrates superior fitting performance compared to current baseline distributions, showcasing its practical value in financial risk evaluation. The combination of Gaussian and Rayleigh distributions through arctangent transformation is particularly successful in representing extreme market behaviour and tail dependencies that are frequently found in real-world financial data. We analyze statistical characteristics of the proposed distribution, validating its applicability in the from of a financial perspective for insurance dataset and additionally actuarial risk measures are also calculated for the proposed model. The maximum likelihood method (MLE) has been utilized to estimate the parameters of the proposed distribution.  

Future research could build on this work by investigating additional statistical properties of the proposed distribution. Additionally, a thorough assessment of the model's effectiveness across a variety of financial and economic datasets from different market conditions and regions would enhance its applicability and practical use in a range of risk management scenarios. It is also possible to investigate the new mixture type of distribution based on the arctan function. The arctan-based mixture distribution can be applied to real datasets in healthcare, reliability, and other areas, allowing for a comparison of its performance against other well-known distributions.

\end{document}